\newcommand{\vertical}[1]{\rotatebox[origin=c]{90}{#1}}
\DeclareMathOperator{\ConvDown}{ConvDown}
\DeclareMathOperator{\Conformer}{Conformer}
\DeclareMathOperator{\ConvUp}{ConvUp}
\DeclareMathOperator{\Lin}{Linear}
\DeclareMathOperator{\LN}{LN}
\DeclareMathOperator{\MHA}{MHA}
\DeclareMathOperator{\Downsample}{Downsample}
\DeclareMathOperator{\MaskModule}{MaskModule}
\DeclareMathOperator{\FF}{FF}
\DeclareMathOperator{\MLP}{MLP}
\renewcommand\vec{\mathbf}
\newcommand{\ee}{\mathcal{E}}
\newcommand{\LL}{\mathcal{L}}
\newcommand{\EE}{E}
\newcommand{\seg}{\mathrm{seg}}
\DeclareMathOperator{\dur}{dur}
\newcommand{\cmark}{\ding{51}}
\newcommand{\xmark}{\ding{55}}
\theoremstyle{plain}
\newtheorem{theorem}{Theorem}[section]
\newtheorem{proposition}[theorem]{Proposition}
\theoremstyle{definition}
\theoremstyle{remark}
\icmltitlerunning{EEND-M2F: Masked-attention mask transformers for speaker diarization}
\begin{document}

\twocolumn[
\icmltitle{EEND-M2F: Masked-attention mask transformers for speaker diarization}

\icmlsetsymbol{equal}{*}

\begin{icmlauthorlist}
\icmlauthor{Marc H\"ark\"onen}{fano}
\icmlauthor{Samuel J. Broughton}{fano}
\icmlauthor{Lahiru Samarakoon}{fano}
\end{icmlauthorlist}

\icmlaffiliation{fano}{Fano Labs, Hong Kong}

\icmlcorrespondingauthor{Marc H\"ark\"onen}{marc.harkonen@fano.ai}

\icmlkeywords{Speech, Diarization, Transformers}

\vskip 0.3in
]

\printAffiliationsAndNotice{}  %

\begin{abstract}
In this paper, we make the explicit connection between image segmentation methods and end-to-end diarization methods.
From these insights, we propose a novel, fully end-to-end diarization model, EEND-M2F, based on the Mask2Former architecture.
Speaker representations are computed in parallel using a stack of transformer decoders, in which irrelevant frames are explicitly masked from the cross attention using predictions from previous layers.
EEND-M2F is lightweight, efficient, and truly end-to-end, as it does not require any additional diarization, speaker verification, or segmentation models to run, nor does it require running any clustering algorithms.
Our model achieves state-of-the-art performance on several public datasets, such as AMI, AliMeeting and RAMC.
Most notably our DER of 16.07\% on DIHARD-III is the first major improvement upon the challenge winning system.

\end{abstract}

\section{Introduction}\label{sec:intro}
\begin{table*}
    {\centering
    \begin{tabular}{llllllllll}
        \toprule
         & \multicolumn{9}{c}{Dataset}\\
        \cmidrule{2-10}
        {Model} & \vertical{Aishell-4} & \vertical{AliMeeting-far} & \vertical{AliMeeting-near} & \vertical{AMI-Mix} & \vertical{AMI-SDM} & \vertical{CALLHOME} & \vertical{DIHARD-III} & \vertical{RAMC} & \vertical{VoxConverse} \\
        \midrule
        VAD+VBx+OSD \cite{landini23diaper} & 15.84 & 28.84 & 22.59 & 22.42 & 34.61 & 26.18 & 20.28 & 18.19 & (6.12)\\
        EEND \cite{fujita19EEND-SA, horiguchi22eend-eda} & --- & --- & --- & 27.70 & --- & (21.19) & 22.64 & --- & --- \\
        EEND-EDA \cite{horiguchi22eend-eda} & --- & --- & --- & 15.80 & --- & (12.88) & 20.69 & --- & --- \\
        PyAnnote 3.1 \cite{plaquet23pyannote31} & 13.2 & 23.3 & --- & 18.0 & 22.9 & 28.4\textsuperscript{\textdagger} & 21.3 & 22.2\textsuperscript{\textdagger} & 10.4\\
        DiaPer \cite{landini23diaper} & 31.30 & 26.27 & 24.44 & 30.49 & 50.97 & 24.16 & 22.77 & 18.69 & (22.10) \\
        \midrule
        EEND-M2F (ours) & 15.56 & \textbf{13.20} & 10.77 & 13.86 & 19.83 & 21.28 & 16.28 & 11.13 & 15.99\\
        \quad +0.25s collar & (10.75) & (5.87) & (5.20) & (9.16) & (14.29) & (14.87) & (8.93) & (6.52) & (12.02)\\
        EEND-M2F + FT (ours) & 13.98 & 13.40 & \textbf{10.45} & \textbf{12.62} & \textbf{18.85} & 23.44 & \textbf{16.07} & \textbf{11.09} & 16.28\\
        \quad +0.25s collar & (9.34) & (6.11) & (5.02) & (7.92) & (13.33) & (16.72) & (8.82) & (6.46) & (12.36)\\
        \midrule
        State-of-the-art (as of Jan 2024) & \textbf{13.2} & 23.3 & 22.59 & 13.00 & 19.53 & (\textbf{10.08}) & 16.76 & 13.58 & (\textbf{4.0})\textsuperscript{*}\\
        Source & \cite{plaquet23pyannote31} &  \cite{plaquet23pyannote31} & \cite{landini23diaper} & \cite{chen23AED-EEND-EE} & \cite{he23ANSD-MA-MSE} & \cite{chen23AED-EEND-EE} & \cite{he23ANSD-MA-MSE} & \cite{samarakoon23EEND-TA} & \cite{baroudi23pyannote_vox}\\
        \bottomrule
    \end{tabular}}
    {
        \footnotesize
        {}\textsuperscript{\textdagger}From \url{https://github.com/pyannote/pyannote-audio/blob/develop/README.md}, as of commit 80634c9.\\
        {}\textsuperscript{*}Potentially biased, as model was tuned and validated on VoxConverse test set.
    }
    \caption{Diarization Error Rate (DER) across several datasets. Lower is better. Our single model EEND-M2F trained on all datasets simultaneously achieves state-of-the-art results, and performance is further improved by single dataset finetuning (+ FT). Error rates are computed with no collar, no oracle segments or speaker counts, and including overlapped speech. Values in parentheses are computed with a collar of 0.25s.}
    \label{tab:main}
\end{table*}

Image segmentation is the task of generating masks for sets of objects in an image.
Speaker diarization can be thought of as image segmentation for speech:
audio is viewed as a 1-dimensional image, and each speaker constitutes an object to be identified.
To give a binary mask for each speaker over time means effectively answering the question ``who spoke when''.

Despite this similarity between diarization and image segmentation, methods used in both communities have evolved in very different directions.
Traditional, cascaded diarization models consist of several components, such as voice activity detection, speaker embedding evaluation, and clustering \cite{park22dia_review}.
These components are trained separately on non-diarization objectives, and often do not handle overlapped speech.
In object detection, the preferred approach was R-CNN \cite{Girshick14r-cnn,Girshick15fast_r-cnn,ren15faster_r-cnn}, a two-stage approach combining region proposals and a CNN classifier.
Image segmentation is enabled by adding a mask prediction head \cite{He17mask_r-cnn}.
Alternatively, one can classify each pixel separately in a fully convolutional network \cite{long15fully}.

An initial convergence in methodology was achieved recently with the advent of the end-to-end neural diarization model (EEND) \cite{fujita19eend}, which directly outputs masks for a fixed number of speakers.
Its successor, the EEND-EDA \cite{horiguchi20eend-eda} model, adds a branch for computing speaker-wise embeddings, which allows it to handle a variable number of speakers.
The overall architecture of EEND-EDA has striking similarities to recent end-to-end segmentation models such as Maskformer \cite{cheng2021maskformer}.

End-to-end models remain dominant in the image space, with many recent state-of-the-art methods opting for Transformer \cite{vaswani17transformer} based models \cite{carion20DETR,zhu2020deformable,cheng2021maskformer,m2f,kirillov2023segment_anything,jain2023oneformer,Li23mask_dino}.
The development of diarization models however seems to have reverted to multi-stage, multi-model solutions, with many methods integrating EEND only as a subcomponent \cite{kinoshita21eend-vc,horiguchi21eend-gla,bredin23pyannote21}.
TS-VAD \cite{medennikov20TS-VAD} based models require auxiliary inputs, such as speaker embeddings or initial diarization predictions, to output refined diarization results.
In general, these approaches have outperformed recent end-to-end models \cite{rybicka22end,fujita23intermediate,landini23diaper}, while simultaneously requiring substantially more computational resources.

One of the goals with this work is to bridge the gap between modern image segmentation and diarization models, and reestablish end-to-end diarization models as viable, conceptually simple, and computationally efficient alternatives to current state-of-the-art systems.
To this end, we introduce a novel diarization model, coined end-to-end neural diarization with masked-attention mask transformers (EEND-M2F).
The model architecture is heavily influenced by that of Mask2Former \cite{m2f}, and we make no claim of originality in its design.
Rather, our aim is to show the effectiveness of modern advances in computer vision when applied to speech, hopefully encouraging further interaction between these two communities.
Our model achieves state-of-the-art performance on several datasets, as can be seen in \cref{tab:main}.

In \cref{sec:background}, we give a brief overview of recent advances in image segmentation and diarization, making explicit connections between the two fields.
\cref{sec:model} introduces EEND-M2F and motivates some of the design choices.
Results and ablation studies can be found in \cref{sec:experiments}.
In \cref{sec:discussion} we discuss possible future directions.
Finally, in \cref{app:der}, we present a simple and concise implementation of a computation routine for Diarization Error Rate (DER), the \emph{de facto} standard metric for diarization.
Since our implementation uses directly the same output masks and labels as in training, this version can be easily included in a validation or testing routine, with no conversions to segments or RTTM files needed.
By using masks instead of a list of segments, our implementation runs in parallel on a GPU, and provably outputs the DER corresponding to the optimal speaker permutation.
Our hope with this is to nudge the community into solutions that are clean, concise, and parallelizable.

To summarize, we list our main contributions:
\begin{itemize}
    \item We demonstrate that, with the right modifications, image segmentation models can be used almost as-is in speaker diarization.
    We motivate design choices in the context of diarization.
    \item We show that a simple, one-stage end-to-end model can still be competitive and outperform multi-component systems, model ensembles, and large self-supervised pretrained models.
    \item We achieve state-of-the-art results in several datasets, such as DIHARD-III (16.07\%), AMI-Mix (12.62\%), and AliMeeting (13.20\%), beating model ensembles, multi-stage models, and large pretrained models.
    Many of the state-of-the-art results can be achieved even without additional dataset-specific finetuning.
    Our model is evaluated using the most stringent settings, i.e.~with overlaps, no oracle segments or speaker counts, and no oracle voice activity detection (VAD).
\end{itemize}

\section{Related works}\label{sec:background}
\subsection{Image segmentation}
Common tasks in image segmentation include semantic segmentation, instance segmentation, and panoptic segmentation.
Borrowing the terminology of \cite{Kirillov19panoptic}, semantic segmentation can be though of the study of \emph{stuff}, amorphous regions of similar materials and textures.
An example output in semantic segmentation could be a set of masks delimiting the sky, grass, crowds, or buildings.
Instance segmentation on the other hand studies \emph{things}, often countable and enumerable.
For example, the output could be one mask for each person, one for each car, one for each dog.
Panoptic segmentation unifies both tasks, such that each pixel is classified into a class and instance.

A closely related task is object detection, which is the task of generating and classifying bounding boxes around objects of interest.
In fact, early neural segmentation models were often built on top of object detection models, one of the most notable being Mask-R-CNN \cite{He17mask_r-cnn}, which adds a mask prediction head to Faster-R-CNN \cite{ren15faster_r-cnn}.
Its two-stage approach first predicts bounding box proposals for instances, which each get refined separately into instance masks using a convolutional neural network applied to the pixels in the box.
Mask-R-CNN adopts the \emph{mask classification} approach, as each instance mask is classified at once.
Conversely, early semantic segmentation models, e.g.~\cite{long15fully} adopt a \emph{per-pixel classification} approach, where the output is a class prediction for each pixel.

The DETR \cite{carion20DETR} model marked a major step for object detection in the Transformer era.
It replaces the traditional region proposals with learnable object queries, which get mixed with image features in a Transformer decoder to directly output box predictions.
During training, a bipartite matching step determines the optimal matching between groundtruth and predicted objects.
Improvements to DETR include modifications to the attention mechanism \cite{zhu2020deformable}, query construction \cite{liu22dab-detr,zhang22dino}, or improvements to the matching scheme \cite{Zong23hybrid-DETR}.

The DETR architecture inspired a number of related models for segmentation tasks \cite{cheng2021maskformer,m2f,kirillov2023segment_anything,jain2023oneformer,Li23mask_dino,yu22kmax-deeplab}.
The ideas for all these are similar: a series of transformer decoder layers refine a set of mask queries, which get matrix multiplied with image features to generate mask predictions.
This simple design allows the models to be adapted to solve any of the three segmentation tasks with only a minor change to the inference procedure.

\subsection{Diarization models}
Diarization models can be divided into roughly two classes: clustering based systems, and end-to-end models.
Clustering based systems often rely on separate components applied in sequence to output a diarization result.
One example is the VBx system \cite{diez19hmm_vbx,landini22vbx}, which incorporates a voice activity detector (VAD), a speaker embedding extractor, and a clustering algorithm which uses a hidden Markov model for tracking speaker turns.

End-to-End Neural Diarization (EEND) \cite{fujita19eend} formulates diarization as a multi-label classification problem.
The most recognized end-to-end architecture is EEND-EDA \cite{horiguchi20eend-eda}, which is designed to accommodate diarization for an unspecified number of speakers.
However, due to computing the permutation invariant loss by brute force, the original EEND-EDA model was only ever publicly trained to diarize for a maximum of five speakers.
Consequently, it shows poor generalization performance when presented with recordings that contain six or more speakers.

Enhancements to the EEND-EDA model have so far included replacing the self-attention layers of the encoder with the Conformer architecture \cite{liu21conformer,leung21conformer}, improving the prediction head \cite{Wang23EDA-TS-VAD}, and updating the LSTM-based EDA module.
Some studies have attempted to improve the zero vector input to the  LSTM decoder, incorporating the use of an attention mechanism \cite{pan22EEND-spkattn} and summary representations \cite{broughton23summary}.
Nevertheless, the inherent sequential nature of the LSTM architecture results in longer training times and difficulties in handling long-term dependencies due to the vanishing gradient problem.
Non-autoregressive approaches to updating the EDA module have mostly involved swapping the LSTM layers with Transformer variants \cite{samarakoon23EEND-TA,fujita23intermediate,rybicka22end,chen23AED-EEND-EE}.

Other recent works blur the line between clustering and end-to-end models.
One popular method uses EEND on short sliding windows to create local diarization results, which are stitched together by a clustering algorithm \cite{kinoshita21eend-vc,kinoshita21eend-vc-2,plaquet23pyannote31,bredin23pyannote21,horiguchi21eend-gla}.
Others implement EEND with additional input requirements obtained from auxiliary models. 
These may include speaker identity vectors \cite{medennikov20TS-VAD,cheng23SS-TS-VAD} obtained from speaker verification models, or even entire preliminary diarization results \cite{he23ANSD-MA-MSE} computed by another diarization model.

\subsection{Connections and differences}
Comparing the structures of the Maskformer family with the EEND-EDA family reveals striking similarities.
Both models start with a \emph{backbone}, an encoder stack transforming an input image/audio into a latent low-resolution representation.
Next, the models branch out into a query decoder, which generate a number of embeddings, or attractors, corresponding to objects or speakers.
A classification layer determines which embeddings are valid objects/speakers, after which a mask is computed via a matrix product between valid embeddings and the latent representation.

These similarities become obvious when we view diarization as an instance segmentation problem.
The diarization output of a length $T$ recording with $S$ speakers is represented by a $T \times S$ boolean matrix $Y = (y_{t,j})$, where $y_{t,j} = 1$ if speaker $j$ is active at frame $t$.
Correspondingly, a $W \times H$ image with $S$ objects is represented by $S$ boolean masks, organized in a $W \times H \times S$ tensor $Y = (y_{i,j,k})$, where $y_{i,j,k} = 1$ if object $k$ is present at the pixel $(i,j)$.
Hence, speaker diarization is equivalent to finding masks for $S$ object instances in a 1-dimensional, $T$ sized image.

Compared to instance segmentation models, the shift to the audio domain allows for a few simplifications.
The move from 2-dimensional images to 1-dimensional audio enables an increase in resolution by an order of magnitude.
As speech can overlap, we explicitly allow overlapping mask prediction and therefore do not perform any non-maximal suppression (NMS) \cite{Kirillov19panoptic}.
The classification head used in image segmentation models is also simplified for diarization, as we only have two classes (no-speaker/speaker), as opposed to possibly hundreds (no-class, car, tree, person\dots).

\section{EEND-M2F}
\label{sec:model}

\begin{figure}[t]
  \centerline{\includegraphics[width=0.98\linewidth]{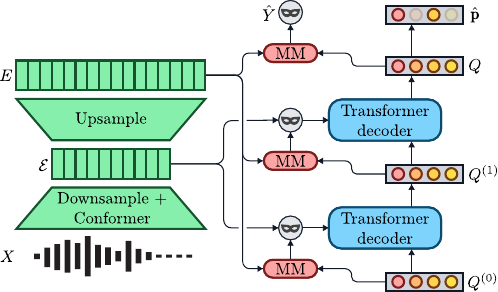}}
  \caption{Overview of EEND-M2F. An encoder backbone, depicted in green, processes the input audio into low and high resolution acoustic features.
  A set of learnable queries are iteratively refined using transformer decoders.
  A mask module (MM, in red) generates speaker-wise masks from queries and acoustic features.
  These masks correspond to diarization predictions, and they are also used to mask the acoustic features in the transformer decoder.
  Finally, an MLP determines which queries correspond to actual speakers.}
  \label{fig:m2f_net}
\end{figure}

We present an overview of the architecture of EEND-M2F, illustrated in \cref{fig:m2f_net}.
Our model takes as input a sequence $X = (x_{i,j})$ of dimensions $T \times D'$, and outputs a speaker-wise boolean mask $\hat Y = (y_{i,j})$ of dimensions $T \times \hat S$, where $\hat S$ is an estimate for the number of speakers.

As mentioned in \cref{sec:intro}, we base our model on Mask2Former \cite{m2f} due to its simplicity and strong performance in a variety of segmentation tasks.
We modify certain design decisions of Mask2Former to better fit the speech domain; each modification is motivated in the corresponding subsection.
Parameter choices and other implementation details can be found in \cref{ssec:implementation} and \cref{tab:hparams}.

\subsection{Feature backbone}
The backbone, depicted in green in \cref{fig:m2f_net}, maps the input sequence into two latent acoustic sequences of different resolutions.
We swap out Mask2Former's image encoders (Swin/ResNet) for one better suited for speech.
For simplicity, we choose a standard Conformer \cite{gulati2020conformer} as our backbone.
First, the input sequence $X$ gets downsampled via convolutional layers to a 1/10th the resolution, after which it passes through $N_\mathrm{conf}$ Conformer layers to produce the low-resolution latent sequence $\ee$, of shape $\frac{T}{10} \times D$.
The full-resolution latent sequence $\EE$ is obtained by a temporal convolutional upsampling of $\ee$ back to the shape $T \times D$.
\begin{align}
\begin{split}
    \ee &= \Conformer(\ConvDown(X)), \\
    \EE &= \ConvUp(\ee).
\end{split}
\end{align}
Note that unlike image segmentation, where objects far away from the camera appear at lower resolution, speech always appears at the same temporal resolution.
Therefore, we do not consider outputting features at additional resolutions with e.g.~a feature pyramid network \cite{lin2017fpn}.

\subsection{Mask module}
The mask module, in red in \cref{fig:m2f_net}, combines the acoustic features with queries to generate speech activity probabilities for each query.
These are then thresholded to create masks for each speaker.

Given $N$ queries $Q$ of shape $N \times D$, we compute set of $N$ diarization logits by simple matrix multiplication
\begin{align}
    \MaskModule(Q, \EE) := \EE \cdot \MLP(Q)^T,
\end{align}
where $\MLP(\cdot)$ denotes a multi-layer perceptron.
Speech activity probabilities are obtained by passing the output through a sigmoid layer, that is
\begin{align}
    \tilde Y = \sigma(\MaskModule(Q, \EE)).
\end{align}

\subsection{Query Module}
In this subsection we explain how the queries $Q$ are constructed.
This module is the equivalent of the EDA module in EEND-EDA; we use the word \emph{query} instead of \emph{attractor} as it better describes the role of $Q$ in the transformer decoder.

For this task we use a stack of $L$ mostly standard Transformer decoder layers, with the only modification being swapping the order of layers so that cross-attention happens before self-attention.
In the cross-attention layers, we mask frames in $\ee$ from each intermediate query using the diarization result obtained from the mask module.
Masking happens \emph{per query}, meaning that each query will see different acoustic information.

More precisely, the query modules take as input a set of queries $Q^{(\ell)}$ of shape $N \times D$, and the latent representation $\ee$ of shape $\frac{T}{10} \times D$, and outputs a set of refined queries $Q^{(\ell + 1)}$, of the same shape as $Q^{(\ell)}$.
First, we compute masked cross-attention between $Q^{(\ell)}$ and $\ee$.
As the mask, we use the intermediate diarization logits $M^{(\ell)} = \Downsample(\MaskModule(Q^{(\ell)}, \EE))$ hard-thresholded at 0 (i.e.~at a probability of 0.5).
Here, $\Downsample(\cdot)$ denotes a simple linear interpolation.
The result then passes through the standard Transformer self-attention and feed-forward layers to output the refined queries.
To summarize
\begin{align}
    \begin{split}
        \label{eq:transformer_decoder}
        Q' &= \LN(\MHA(Q^{(\ell)} + P, \EE, \EE; M^{(\ell)}) + Q^{(\ell)}),\\
        Q'' &= \LN(\MHA(Q' + P, Q' + P, Q') + Q'),\\
        Q^{(\ell + 1)} &= \LN(\FF(Q'') + Q'')
    \end{split}
\end{align}
where $\LN(\cdot)$ denotes LayerNorm \cite{ba2016layernorm}, $\FF$ is a feed-forward layer, and $\MHA(q, k, v; m)$ denotes multi-head attention, with query $q$, key $k$, and value $v$.
The matrix $P$ is a learned, randomly initialized positional encoding for the queries.
The optional input $m$ is a binary mask, which has the effect of adding $-\infty$ to the attention weight $a_{i,j}$ whenever $m_{i,j} = 0$, effectively hiding frame $j$ from query $i$.

The initial queries $Q^{(0)}$ are randomly initialized, learned parameters.
The final set of queries $Q^{(L)}$ are used for inference, that is $Q := Q^{(L)}$.

To explain the motivation behind masked attention, we equate each query with a speaker.
Without masked attention, any given query needs to attend to frames where the corresponding speaker is active, and the model needs to learn to ignore other frames.
For speakers with very little speech in long conversations, the signal from the few frames of speech activity may drown in the noise of all other speech frames.
Masked attention avoids this issue by design: the frames irrelevant to the query are simply masked away, allowing the attention mechanism to focus on relevant acoustic information only.
In this way, model capacity is not wasted in learning to ignore certain frames.
This is in fact similar in spirit to the approach taken by certain TS-VAD based diarization models \cite{he23ANSD-MA-MSE}, which rely on pre-computed diarization results to select frames belonging to each speaker.
In comparison, our method works end-to-end, in a single model, and a single encoder pass.

\subsection{Query classification module}
\label{ssec:class_module}
In effect, the $T \times N$ matrix $\tilde Y$ corresponds to a prediction for $N$ speaker ``proposals'', where $N$ is usually much larger than the ground-truth number of speakers $S$.
We will employ a simple classification layer to decide which columns to cut from $\tilde Y$.
We get a vector $\hat{\vec p}$ of probabilities
\begin{align}
    \hat{\vec p} = \sigma(\Lin(Q)).
\end{align}
During inference, the vector $\hat{\vec p}$ is hard-threhsolded to a fixed value $\theta$.
The final output $\hat Y$ is constructed from $\tilde Y$ by keeping columns $i$ for which $p_i > \theta$.
The estimated speaker count $\hat S$ is thus the number of entries of $\hat{\vec p}$ that are greater than $\theta$.
Essentially, one can think of the $N$ queries as mask proposals, which are either accepted or rejected based on the result of the classification layer.
When training, the output of the Hungarian matching step (\cref{ssec:matching}) determines which of the speaker proposals correspond actual speakers.

In EEND-EDA predicted attractors are classified as either being a speaker or a non-speaker, and the model is trained so that the first $S$ attractors are classified as speakers and the $S+1$ as a non-speaker.
This means that some model capacity needs to be used to organize and reorder speaker attractors into the first $S$ positions of the attractor sequence.
While self-attention makes this permutation operation fairly simple to learn in transformers, we avoid it altogether by simply ignoring the output ordering of the queries.
This is precisely the classification strategy of Mask2Former.

\subsection{Hungarian matching}
\label{ssec:matching}

\begin{figure}[t]
    \centerline{\includegraphics[width=0.8\linewidth]{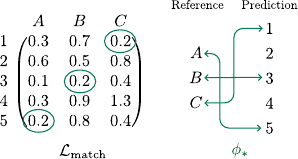}}
    \caption{Example of the Hungarian matching procedure with $N=5$ queries (1,2,3,4,5) and $S=3$ speakers (A,B,C).
    A matching is an assignment of each of the $S$ speakers to a unique query.
    The cost matrix consists of pairwise matching costs, and the optimal matching $\phi^*$ is the one minimizing the sum of $S$ costs, where at most one is chosen from each row and each column.
    }
    \label{fig:matching}
  \end{figure}
  
Both end-to-end diarization and image segmentation are inherently \emph{set prediction} problems: any permutation of a model output or reference labels should be deemed equally valid.
During training, we make the loss permutation invariant by assigning each of the $S$ ground-truth speakers to a unique prediction among the $N$ possibilities.
The optimal matching $\phi^*$ is the element of the set $\mathcal{P}^N_S$ of $S$-permutations of $N$ that minimizes the matching criterion $\mathcal{L}_\mathrm{match}(\hat Y_\phi, Y)$, where $\hat Y_\phi$ is the ordered subset of columns of $\hat Y$ described by $\phi \in \mathcal{P}^N_S$.
A simple example is shown in \cref{fig:matching}, with $N=5$ and $S = 3$.

The matching cost function $\mathcal{L}_\mathrm{match}$ consists of three weighted components
\begin{align}
    \begin{split}
    \mathcal{L}_\mathrm{match} = &\lambda_\mathrm{dia}\mathcal{L}_\mathrm{match,dia} \\&+ \lambda_\mathrm{dice}\mathcal{L}_\mathrm{match,dice} + \lambda_\mathrm{cls}\mathcal{L}_\mathrm{match,cls},
    \end{split}
\end{align}
similar to the training loss function described in \cref{ssec:losses}.
The diarization matching cost $\mathcal{L}_\mathrm{match,dia}$ is the sum of binary cross entropies, averaged over time.
The dice matching cost $\mathcal{L}_\mathrm{match,dice}$ is the sum of dice losses (c.f.~\cref{sssec:dice}).
Finally, we add a classification penalty to discourage the selection of predictions classified as ``non-speaker''
\begin{align}
    \mathcal{L}_\mathrm{match,cls} = -\sum_{i=1}^S \hat{p}_{\phi(i)}.
\end{align}
We choose probabilities instead of logits to keep the scale comparable with the other components.
We observe that including a classification penalty to the matching loss is essential to avoid duplicated predictions during inference.
A possible explanation could be that it introduces a stronger feedback loop ensuring a consistent query-to-speaker matching across training epochs.

Note that the cost function is a sum of pairwise matching costs (between reference speaker and predicted output).
Therefore, an efficient implementation would first construct the $N \times S$ cost matrix, and then use the Hungarian algorithm \cite{kuhn55hungarian} to find the optimal matching $\phi^* \in \mathcal{P}_S^N$.
An efficient implementation of the Hungarian algorithm, \texttt{scipy.optimize.linear\_sum\_assignment}, can be found in SciPy \cite{scipy}.

\subsection{Losses}
\label{ssec:losses}
The loss function optimized during training is a linear combination of three components: the diarization loss $\LL_\mathrm{dia}$, the Dice loss $\LL_\mathrm{dice}$, and the classification loss $\LL_\mathrm{cls}$
\begin{align}
    \LL =&\lambda_\mathrm{dia}\LL_\mathrm{dia} + \lambda_\mathrm{dice}\LL_\mathrm{dice} + \lambda_\mathrm{cls}\LL_\mathrm{cls}.
\end{align}
These losses are computed based on the optimal permutation $\phi^*$ obtained in the Hungarian matching step.
\subsubsection{Diarization loss}
Diarization loss is computed using binary cross entropy, as is standard for most diarization systems and Mask2Former.
We only keep predictions corresponding to queries matched to actual speakers in the Hungarian matching step.
Therefore, for each recording, the loss is of the form
\begin{align}
    \LL_\mathrm{dia} = \sum_{i = 1}^S \sum_{t = 1}^T H(\hat{Y}_{t,\phi^*(i)}, Y_{t,i}),
\end{align}
where $H(p,q) = -q\log p - (1-q)\log(1-p)$.
For example, in the situation of \cref{fig:matching}, the diarization loss would be the sum of the binary cross entropies between speaker A and prediction 5, speaker B and prediction 3, and speaker C and prediction 1.
In a batch, we compute the weighted average of the above quantities, using the product $TS$ as the weight for each recording.
\subsubsection{Dice loss}
\label{sssec:dice}
We add the dice loss \cite{milletari16dice} as a novel addition to diarization systems.
As with the diarization loss, only matched speakers contribute to the dice loss.
For each recording, the loss is
\begin{align}
    \LL_\mathrm{dice} = 1 - \frac{1}{S}\sum_{i = 1}^S \frac{2\sum_{t=1}^T \hat{Y}_{t,\phi^*(i)}\cdot Y_{t,i}}{\sum_{t=1}^T \hat{Y}_{t,\phi^*(i)} + \sum_{t=1}^T Y_{t,i}}.
    \label{eq:dice_definition}
\end{align}
In a batch, the dice losses for each recording are averaged, with weighting given by $S$ for each recording.

One of the benefits of dice loss is its scale invariance: every speaker contributes to the loss equally, regardless of the amount of their speech in the conversation.
Dice loss also directly targets missed speech, as only frames with reference speech ($Y_{t,i} = 1$) contribute to the loss.

\subsubsection{Query classification}
Recall that the model outputs a vector $\hat{\vec p}$ of probabilities, indicating which of the $N$ queries is likely to correspond to an actual speaker.
During training, the reference probability vector $\vec p$ is given by the result of the Hungarian matching: $p_i = 1$ if the $i$th query is matched to a reference speaker (i.e.~$\phi^*(j) = i$ for some $j = 1,\dotsc,S$), otherwise it is set to $0$.
For example in \cref{fig:matching}, we would have $\vec p = (1,0,1,0,1)$.
The classification loss is the binary cross entropy between $\vec p$ and $\hat{\vec p}$.
Since $S$ is usually much smaller than $N$, to account for the class imbalance, we weigh the negative class ($p_i = 0$) down by a factor of $0.2$.
\begin{align}
    \LL_\mathrm{cls} = \sum_{i=1}^N \left[ 1.0 \cdot p_i + 0.2 \cdot (1-p_i) \right] \cdot H(\hat{p}_i, p_i)
\end{align}
In a training batch, the classification loss $\LL_\mathrm{cls}$ is the weighted average of all binary cross entropies.

\subsubsection{Deep supervision}
Similar to recent diarization papers \cite{landini23diaper,fujita23intermediate}, we use deep supervision during model training.
Following Mask2Former, we only supervise intermediate outputs from the query module, using every set of queries $Q^{(0)},Q^{(1)},\dotsc,Q^{(L)}$.
This means that for each recording, the Hungarian matching and loss computation is performed $L+1$ times.
Deep supervision is essential when using masked attention, as it ensures that intermediate masks correspond to reasonable diarization predictions.

\section{Experiments}
\label{sec:experiments}
\subsection{Datasets}
\label{ssec:datasets}
Pretraining data is generated from the LibriSpeech corpus \cite{librispeech} using the standard diarization mixture simulation algorithm \citep[Alg.~1]{fujita19eend}.
We generate a total of $400\,000$ mixtures containing 1, 2, 3, or 4 speakers, with average silence interval $\beta = 2,2,5,9$ respectively.
Contrary to previous diarization works, we found no benefit in starting pretraining with 2 speaker simulations only and therefore we pretrain directly on the entire simulated dataset.

For finetuning, we use a selection of several publicly available diarization datasets: Aishell-4 \cite{aishell4}, AliMeeting \cite{alimeeting}, AMI-SDM (single distant microphone) and AMI-Mix (hedaset mixtures)\cite{AMI}\footnote{full-corpus ASR partitions, see \url{https://github.com/BUTSpeechFIT/AMI-diarization-setup}}, CALLHOME \cite{callhome}\footnote{standard Kaldi split \url{https://github.com/kaldi-asr/kaldi/tree/master/egs/callhome_diarization/v2}}, DIHARD-III (``full'' partition), RAMC \cite{ramc}, and VoxConverse (v.~0.3) \cite{voxconverse}.
When provided, we use the official training/validation/testing splits provided by the dataset.
If only training and evaluation data is provided, we use the evaluation data as a test set, and split the training set into training and validation subsets using a 80\%/20\% split.
In case of multi-channel audio, we mix all channels into one.
For the AliMeeting test set, we refer to the far-field 8 microphone array recordings, used in the M2MeT challenge, as ``AliMeeting-far'', whereas ``AliMeeting-near'' refers to the headset microphone recordings.

\subsection{Implementation details}
\label{ssec:implementation}
\textbf{Backbone}
The model takes as input a sequence of 23-dimensional log-Mel features, extracted every 10 ms with a window size of 25 ms.
The input get downsampled into $D = 256$ dimensional features by a depthwise separable convolution, with hop size 10 and kernel size 15, followed by layer normalization and dropout with probability 0.1.

Downsampled features are fed through $N_\text{conf} = 6$ Conformer layers \cite{gulati2020conformer}, with 4 attention heads and convolution kernel of size 49.
We use layer normalization instead of the usual batch normalization to avoid contaminating batch statistics with padding.

Upsampling is performed by applying a 1-dimensional transposed convolution, a layer normalization, and a GELU activation.
The upsampling block consists of two such sequences, with kernel sizes 3, 5 and strides 2, 5.

\textbf{Mask \& query modules}
Before matrix multiplication, queries are passed through an MLP layer, consisting of two hidden layers with ReLU activations.
The classification layer is a single linear layer.

\textbf{Query module}
The query module consists of $L = 6$ modified Transformer decoder layers (with self-attention and cross-attention swapped).
We use multihead attention with 4 heads, feedforward dimension 1024 and no dropout.
To mask the low resolution features in the cross-attention layer, we downsample the high resolution intermediate mask by linearly interpolating the logits, and thresholding at a probability of 0.5.
We initialize $N=50$ learnable queries, each of dimension 256, along with a set of $50 \times 256$ learnable positional encodings.

\textbf{Training and inference}
After pretraining, we only keep the weights of the backbone for finetuning, and randomize other parameters.
This allows for fast model retraining when only diarization head parameters are changed.
After each finetune, the 10 best checkpoints in terms of validation DER are averaged to create the final model weights.
We start by finetuning on an aggregated dataset consisting of all training data, resulting in a single general model labeled EEND-M2F in \cref{tab:main}.
This model is further finetuned on single datasets, resulting in the row EEND-M2F + FT in \cref{tab:main}.

Numerical hyperparameters can be found in \cref{tab:hparams}.
We use the AdamW optimizer \cite{loshchilov18adamw}, but we observe that removing weight decay from the standard Mask2Former training parameters improves performance slightly.
We also observe a very fast overfitting of the classification loss, which is mitigated by label smoothing \cite{szegedy16label_smoothing}.

During inference, we predict all $N=50$ probabilities $p_i$ and diarization results in parallel, and we discard all predictions where $p_i < 0.8 =: \theta$.
As a consequence, EEND-M2F can predict at most 50 speakers at a time, which is well within the limits of current datasets.
Final model outputs are thresholded at 0.5.

Both training and inference are conducted using bfloat16 precision.
Under these settings, our model has a total of $16.3$ million parameters.
Pretraining takes around 90 hours on 4 Nvidia A6000 GPUs, and finetuning takes about 12 hours on one A6000.
The approximately 158 hour long test set takes around 100 seconds to process, making the model about 5,700 times faster than real time.

\begin{table}
    \centering
    \footnotesize
    \begin{tabular}{lccc}
        \toprule
        & & \multicolumn{2}{c}{Finetune} \\
        \cmidrule{3-4}
         & Pretrain & EEND-M2F & +FT \\
        \midrule
        Batch size ($\times$ GPUs) & $128$ ($\times 4$) & $32$ ($\times 1$) & $8$ ($\times 1$) \\
        Utterance length (s) & $50$ & $300$ & $600$ \\
        Max learning rate & $1\cdot 10^{-4}$ & $5\cdot 10^{-5}$ & $5\cdot 10^{-6}$\\
        Steps & $500\,000$ & $50\,000$ & $10\,000$\\
        Learning rate schedule & 1-cycle & \multicolumn{2}{c}{constant} \\
        Backbone dropout & $0.1$ & \multicolumn{2}{c}{$0.1$} \\
        Query module dropout & $0.0$ & \multicolumn{2}{c}{$0.0$} \\
        Weight decay & $0.0$ & \multicolumn{2}{c}{$0.0$} \\
        Label smoothing & $0.0$ & \multicolumn{2}{c}{$0.1$} \\
        $\lambda_\mathrm{dia}, \lambda_\mathrm{dice}, \lambda_\mathrm{cls}$ & $5,5,2$ & \multicolumn{2}{c}{$5,5,2$}\\
        \bottomrule
        \end{tabular}
    \caption{Detailed hyperparameters. After pretraining, we finetune on a concatenation of all datasets using settings in the ``EEND-M2F'' column. Finally, we further finetune on each dataset separately, using settings in the ``+FT'' column.}
    \label{tab:hparams}
\end{table}

\subsection{Main results}
Our main results, along with the current state-of-the-art, can be seen in \cref{tab:main}.
Results are given in the most realistic setting, scoring all speech including overlaps, with no oracle voice activity detection, no oracle speaker counting, and without any no-scoring collar.
It is noteworthy to mention that every previous state-of-the-art result come from models specifically finetuned for a single dataset.
EEND-M2F attains a new state-of-the-art for AliMeeting, DIHARD-III and RAMC, without any dataset-specific finetuning.
Further finetuning allows EEND-M2F to achieve state-of-the-art performance on AMI-Mix and AMI-SDM, with further improvements on other datasets.

One of the shortcomings of EEND-M2F seems to be audio with a high number of speakers.
Such datasets include Aishell-4, with audio ranging from 5 to 7 speakers, and VoxConverse, with recordings containing up to 21 speakers.
In these scenarios, classical clustering based systems are still superior.
Theoretically, we expect the Mask2Former architecture to perform well even when the number of speakers is high, as the same architecture applied to image segmentation routinely handles tens of objects at a time. 
We leave this investigation for future work.

Another dataset in which EEND-M2F is far from state-of-the-art is CALLHOME.
A possible explanation for poor performance may be channel mismatch.
Our model is pretrained and finetuned on a wide variety of 16kHz audio, with the only exception being CALLHOME with its 8kHz telephony audio.
Since CALLHOME is such a small dataset, we suspect that the model may be poorly calibrated to adapt to such a domain.
In fact, the standard training recipe for CALLHOME, as described in \cite{fujita19eend}, pretrains on simulations drawn from 8kHz telephony datasets, which dramatically diminishes domain mismatch upon finetuning.

\subsection{Ablations}
Results in this section are reported on the concatenation of all test sets described in \cref{ssec:datasets}.
As before, our evaluation metric considers overlapped speech, with no collar and no oracle segmentation.

\subsubsection{Deep supervision \& masked attention}
In \cref{tab:deep_sup}, we study the effect of masked attention (MA) and deep supervision (DS).
We see that both features steadily improve performance overall.

We note that Mask2Former requires deep supervision to work optimally, as intermediate predictions are fed back into the model and need to correspond to reasonable diarization results.
In the absence of deep supervision, gradient updates will never directly improve intermediate predictions due to the hard threshold.
This explains the dramatic increase in error rate when masked attention is used without deep supervision.
Once masked attention is removed however, deep supervision is no longer strictly necessary.
\begin{table}
    \centering
    \begin{tabular}{llrrrr}  
        \toprule
        MA & DS & MS & FA & SE & DER \\
        \midrule
        \xmark & \xmark & 7.43 & 4.75 & 3.81 & 15.99\\
        \cmark & \xmark & 16.24 & 4.36 & 5.76 & 26.36\\
        \xmark & \cmark & 6.75 & 4.59 & 4.29 & 15.63\\
        \cmark & \cmark & 6.38 & 4.60 & 4.19 & 15.17\\
        \bottomrule
        \end{tabular}
    \caption{Diarization error rates (DER, \%) and its summands (missed speech MS, false alarms FA, speaker errors SE) when ablating masked attention (MA) and deep supervision (DS).
    }
    \label{tab:deep_sup}
\end{table}

\subsubsection{Loss components}
As dice loss is a new addition to diarization models, in \cref{tab:losses} we analyze its impact on DER.
For certain datasets in image segmentation, dice loss alone can outperform cross-entropy losses \cite{milletari16dice}.
We observe however that dice loss is clearly not as vital for diarization models as it is for image segmentation models.
There is in fact a slight degradation in DER when adding dice loss, which comes from lowering the false alarm rate in exchange for a slight increase in speaker errors and missed speech.
Since in most practical applications missed speech is more impactful than false alarms, we accept the minor hit in DER for the improved missed speech performance enabled by dice loss.
The reason why dice loss alone is not enough is evident from \cref{eq:dice_definition}: dice loss only supervises frames containing reference speech, making a more global binary cross entropy loss necessary.

\begin{table}
    \centering
    \begin{tabular}{ccrrrr}  
        \toprule
        $\mathcal{L}_\mathrm{dia}$ & $\mathcal{L}_\mathrm{dice}$ & MS & FA & SE & DER \\
        \midrule
        \cmark & \xmark & 6.80 & 4.08 & 4.25 & 15.14\\
        \xmark & \cmark & 14.19 & 11.76 & 5.46 & 31.42\\
        \cmark & \cmark & 6.38 & 4.60 & 4.19 & 15.17\\
        \bottomrule
        \end{tabular}
    \caption{Effect of loss components on DER and its summands.}
    \label{tab:losses}
\end{table}

\subsubsection{Number of queries}
\emph{A priori}, the only requirement for the number of queries is that there are at least as many of them as the maximum number of speakers in a single recording, which in our case is 21 speakers in a single VoxConverse recording.
A higher number of queries may be beneficial, as they can act as a ``memory'' and lead to more flexible models, even when the extra queries are never used.
On the other hand, a lower number may lead to more focused training, as there are fewer possibilities to choose from in the Hungarian matching step.
\Cref{tab:queries} tallies DERs of a set of models trained with different number of queries.
While differences are small, the model using 50 queries seems to have struck a good balance in the number of queries.
\begin{table}
    \centering
    \begin{tabular}{lrrrr}  
        \toprule
        \# queries ($N$) & MS & FA & SE & DER \\
        \midrule
        25 & 6.38 & 4.65 & 4.20 & 15.23 \\
        50 & 6.38 & 4.60 & 4.19 & 15.17 \\
        75 & 6.30 & 4.91 & 4.22 & 15.43 \\
        \bottomrule
        \end{tabular}
    \caption{Diarization error rates with different numbers of queries.}
    \label{tab:queries}
\end{table}

\subsubsection{Number of layers}
In \cref{tab:layer} we experiment with different layer configurations.
Unlike most Conformer based EEND models \cite{liu21conformer,leung21conformer,chen23AED-EEND-EE,samarakoon23EEND-TA}, we observe an improvement by increasing the number of Conformer layers from 4 to at least 6 in the backbone.
Once backbone size is fixed, varying the number of masked-attention transformer decoder layers in the mask module has minor effects.
\begin{table}
    \centering
    \begin{tabular}{ccccc}
        \toprule
        & & \multicolumn{3}{c}{Mask module} \\
        \cmidrule(lr){3-5}
        & & 4 & 6 & 8 \\
        \midrule
        \multirow{3}{*}{\vertical{Backbone}} & 4 & 16.41 & 16.43 & 16.48\\
        \addlinespace[0.2em]
        & 6 & 15.20 & \textbf{15.17} & 15.20\\
        \addlinespace[0.2em]
        & 8 & 15.59 & 15.22 & 15.66\\
        \bottomrule
    \end{tabular}
    \caption{Diarization error rates (\%) with different numbers of layers.
    Backbone refers to the number of Conformer (encoder) layers in the backbone.
    Mask module refers to the number of Masked-attention transformer (decoder) layers.}
    \label{tab:layer}
\end{table}

\section{Discussion}
\label{sec:discussion}
In this paper we present EEND-M2F, a novel end-to-end neural diarization model.
Its simple design, which borrows from advances in image segmentation, sets a strong baseline for future models in terms of generalized performance, speed and parameter efficiency.
EEND-M2F achieves state-of-the-art results in several datasets, such as AliMeeting, AMI, CALLHOME, DIHARD-III and RAMC.

The connection between image segmentation and diarization opens the doors for potential future research.
Recent work in object detection has shown that \emph{non-parametric} queries, e.g. obtained by sampling the input image, can outperform globally initialized queries \cite{liu22dab-detr}.
In diarization, TS-VAD based methods \cite{medennikov20TS-VAD, Wang23EDA-TS-VAD, cheng23SS-TS-VAD} operate on a similar principle, but since they require an initial diarization result as an input they cannot be considered end-to-end.
Following the success of the Segment Anything Model \cite{kirillov2023segment_anything}, one could also envisage a more foundational model in speech processing, unifying several segmentation tasks such as diarization, target speech extraction and source separation using a variety of prompt types.
Finally, this study is also a testament to the versatility of Maskformer-style architectures for segmentation tasks.
In addition to diarization, Maskformer variants are now used for state-of-the-art results in instance segmentation, semantic segmentation, panoptic segmentation \cite{m2f}, video instance segmentation \cite{m2f_video}, 3D object segmentation \cite{Schult23mask3d}.

\bibliography{refs}
\bibliographystyle{icml2023}

\newpage
\appendix
\onecolumn
\section{Efficient Diarization Error Rate computations.}
\label{app:der}
\definecolor{bg}{rgb}{0.95,0.95,0.95}
\begin{listing}
\begin{minted}[linenos,mathescape,style=default,bgcolor=bg]{python3}
import torch
from scipy.optimize import linear_sum_assignment

def compute_DER(sys, ref):
    # sys: $T \times \hat{S}$ boolean torch.Tensor
    # ref: $T \times S$ boolean torch.Tensor

    nsys = sys.sum(1)
    nref = ref.sum(1)

    correct = torch.logical_and(sys[:, :, None], ref[:, None, :])
    matching = linear_sum_assignment(correct, maximize = True)
    ncor = correct[matching].sum()

    ms = torch.clamp(nref - nsys, 0).sum()
    fa = torch.clamp(nsys - nref, 0).sum()
    se = torch.min(nsys, nref).sum() - ncor
    de = torch.max(nsys, nref).sum() - ncor # == ms + fa + se

    z = nref.sum()

    return ms/z, fa/z, se/z, de/z
\end{minted}
\caption{Efficient PyTorch implementation of DER computation.}
\label{code:DER}
\end{listing}
The standard evaluation metric for diarization systems is the diarization error rate (DER), introduced by the National Institute of Standards and Technology \cite{nist09der}.
Though the DER metric is extremely simple, most publicly available DER computation methods often span hundreds or even thousands of lines of code and require several dependencies.
Many versions, including \texttt{md-eval.pl}\footnote{https://github.com/nryant/dscore} and spyder\footnote{https://github.com/desh2608/spyder}, operate on speech segments instead of speaker masks, making reference and system speaker matching slow due to sequential processing.

In \cref{code:DER}, we present our version of a DER computation function.
It is short and concise, only taking around 10 lines of Python code.
Since our version is directly compatible with model outputs, the entire computation can be performed in parallel on a GPU, with no virtually memory overhead during training.
The only CPU operation required is an execution of the Hungarian algorithm on a small matrix ($S \times \hat S$), which in practice is extremely fast.

While similar versions of \cref{code:DER} exist in publicly available repositories, they often rely on proxy objectives and explicit brute-force searches for reference and system speaker matching.
The closest implementation is the one in PyAnnote \footnote{\url{https://github.com/pyannote/pyannote-audio}, as of commit 80634c9}, however it uses mean square error as a proxy to match system and reference speakers.
Other versions use the training criterion (binary cross entropy) to determine the correct matching.
These heuristics may result in suboptimal DER numbers, especially in cases where the model is overconfident about its predictions.
Our version on the other hand comes with a guarantee of optimality.

\begin{proposition}
    The algorithm in \cref{code:DER} computes the optimal diarization error rate.
\end{proposition}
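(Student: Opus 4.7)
The plan is to reduce DER minimization over speaker matchings to the classical linear assignment problem, which the Hungarian algorithm solves exactly. I would begin by recalling the NIST decomposition: for an injective partial matching $\phi$ between system and reference speakers,
\[
\mathrm{DER}(\phi) = \frac{\sum_t \left[\mathrm{MS}(t) + \mathrm{FA}(t) + \mathrm{SE}(t, \phi)\right]}{\sum_t N_r(t)},
\]
where $\mathrm{MS}(t) = \max(0, N_r(t) - N_s(t))$, $\mathrm{FA}(t) = \max(0, N_s(t) - N_r(t))$, and $\mathrm{SE}(t, \phi) = \min(N_r(t), N_s(t)) - N_c(t, \phi)$. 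Here $N_r(t), N_s(t)$ are the per-frame reference and system speaker counts, and $N_c(t, \phi)$ counts pairs $(i, j)$ with $\phi(i) = j$ that are simultaneously active at frame $t$.

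Next, I would isolate the $\phi$-dependence. The denominator, $\mathrm{MS}(t)$, and $\mathrm{FA}(t)$ are functions of the raw speaker counts alone, hence matching-independent; only $N_c$ depends on $\phi$. Using the elementary identity $\max(0, a-b) + \max(0, b-a) + \min(a, b) = \max(a, b)$, the numerator collapses to
\[
\sum_t \max(N_r(t), N_s(t)) - \sum_t N_c(t, \phi),
\]
so minimizing DER is equivalent to maximizing the second term. Swapping summations,
\[
\sum_t N_c(t, \phi) = \sum_{(i,j) \in \phi} \sum_t \mathbf{1}[\mathrm{sys}_{t,i} \wedge \mathrm{ref}_{t,j}] = \sum_{(i,j) \in \phi} C_{ij},
\]
with $C_{ij}$ the co-activity count of system speaker $i$ and reference speaker $j$ — exactly the entries of the \texttt{correct} tensor aggregated over time.

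The final step is to invoke the Hungarian algorithm. The right-hand side above is the classical linear assignment objective, and \texttt{scipy.optimize.linear\_sum\_assignment} with \texttt{maximize=True} returns an injective $\phi^{\ast}$ maximizing it. Substituting $\phi^{\ast}$ back into the decomposition, the quantities \texttt{ncor}, \texttt{ms}, \texttt{fa}, \texttt{se}, and \texttt{de} computed in \cref{code:DER} are precisely $\sum_t N_c(t, \phi^{\ast})$ and its four associated error summands; dividing by $z = \sum_t N_r(t)$ then yields the minimal DER.

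The main conceptual step — where essentially the entire content of the proposition lives — is the telescoping identity that turns $\mathrm{MS}+\mathrm{FA}+\mathrm{SE}$ into $\max(N_r, N_s) - N_c$, since this is what decouples the $\phi$-independent error floor from the $\phi$-dependent, additively separable assignment cost. A secondary technical point is the case $\hat{S} \ne S$, where $\phi$ is only a partial injection: this is resolved either by the standard rectangular extension of the Hungarian algorithm, or by padding the smaller side with zero-activity dummy speakers whose rows or columns of $C$ vanish and therefore cannot alter the optimum.
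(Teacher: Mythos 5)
Your proposal is correct and follows essentially the same route as the paper's proof: both reduce the problem to observing that only $N_\mathrm{cor}$ depends on the speaker matching, so minimizing DER is equivalent to maximizing the total co-activity count, which is a linear sum assignment problem on the $\hat S \times S$ co-activity matrix solved exactly by the Hungarian algorithm (with the rectangular case handled by unmatched speakers contributing nothing). The only cosmetic difference is your starting point --- you begin from the $\mathrm{MS}+\mathrm{FA}+\mathrm{SE}$ decomposition and recover $\max(N_r,N_s)-N_c$ via the identity $\max(0,a-b)+\max(0,b-a)+\min(a,b)=\max(a,b)$, whereas the paper starts directly from the NIST formula in that $\max$ form --- but the substance of the argument is identical.
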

\begin{proof}
    The NIST specification \cite{nist09der} defines diarization error rate as
    \begin{align}
        \text{DER} = \dfrac{\sum_{\substack{\text{all}\\\text{segs}}} \dur(\seg) \left(\max(N_\mathrm{ref}(\seg), N_\mathrm{sys}(\seg)) - N_\mathrm{cor}(\seg) \right)}{\sum_{\substack{\text{all}\\\text{segs}}} \dur(\seg) N_\mathrm{ref}(\seg)},
    \end{align}
    where $N_\mathrm{ref}(\seg), N_\mathrm{sys}(\seg)$ are correspondingly reference and system number of speakers in a segment, and $N_\mathrm{cor}(\seg)$ is the number of reference speakers in a segment correctly matched to a system speaker.
    Clearly we may subdivide any segment whose both system and reference speaker are constant.
    If we define each frame to be a segment, we may equivalently write
    \begin{align}
        \label{eq:der_formula}
        \text{DER} = \dfrac{\sum_{t=1}^T \max(N_\mathrm{ref}(t), N_\mathrm{sys}(t)) - N_\mathrm{cor}(t)}{\sum_{t=1}^T N_\mathrm{ref}(t)}.
    \end{align}
    Reusing the notation in lines 5 and 6, the quantities $N_\mathrm{ref}(t)$, $N_\mathrm{sys}(t)$ are \mintinline{python3}|nref[t]|, \mintinline{python3}|nsys[t]| respectively.
    The denominator is precisely the quantity $\mathtt{z}$ computed in line 20.
    
    Note that since neither $N_\mathrm{ref}(t)$ nor $N_\mathrm{sys}(t)$ depend on speaker ordering, the optimal DER is obtained by finding the ordering that maximizes $\sum_{t=1}^T N_\mathrm{cor}(t)$.
    Without loss of generality, let $\hat S \geq S$.
    Since unmatched speakers (when $\hat S$ is strictly greater than $S$) don't contribute to $N_\mathrm{cor}$, it suffices to find the $S$-permutation of $\hat S$ speakers $\phi^*$ that maximizes
    \begin{align}
        \sum_{t=1}^T N_\mathrm{cor}(t) = \sum_{i=1}^{S} \sum_{t=1}^t \mathbb{1}_{\{\mathtt{ref}[t,i] \wedge \mathtt{sys}[t, \phi(i)]\}}
    \end{align}
    among all $\phi \in \mathcal{S}^{\hat S}_S$.
    This is a linear sum assignment problem, that can be solved by constructing the reward matrix $\mathtt{correct[j,i]} = \sum_{t=1}^t \mathbb{1}_{\{\mathtt{ref}[t,i] \wedge \mathtt{sys}[t, j]\}}$.
    Line 13 computes $\mathtt{ncor} = \sum_{t=1}^T N_\mathrm{cor}(t)$ for the optimal permutation, and line 18 computes the numerator of \cref{eq:der_formula}.
    It is now clear that the returned quantity $\mathtt{de / z}$ is the optimal DER.

    A similar argument works for missed speech (MS), false alarms (FA) and speaker errors (SE)
    \begin{gather*}
        \text{MS} = \dfrac{\sum_{t=1}^T \max(N_\mathrm{ref}(t) - N_\mathrm{sys}(t), 0)}{\sum_{t=1}^T N_\mathrm{ref}(t)}, \\
        \text{FA} = \dfrac{\sum_{t=1}^T \max(N_\mathrm{sys}(t) - N_\mathrm{ref}(t), 0)}{\sum_{t=1}^T N_\mathrm{ref}(t)}, \\
        \text{SE} = \dfrac{\sum_{t=1}^T \min(N_\mathrm{ref}(t), N_\mathrm{sys}(t)) - N_\mathrm{cor}(t)}{\sum_{t=1}^T N_\mathrm{ref}(t)}.
    \end{gather*}
    The SE metric is optimal for the same reasons as DER.
    MS and FA metrics don't depend on matching.
\end{proof}

From this formulation we can also readily implement no-scoring collars.
While the NIST specification doesn't explicitly describe how collars are applied, a deep dive into \texttt{md-eval.pl} reveals that a collar of length $t$ is equivalent to ignoring an interval of length $2t$ centered at every start and end of segments in the reference diarization labels.
Translating to the notation in \cref{code:DER}, a collar of $0.25$ seconds would correspond to setting \mintinline{python3}|sys[t-25 : t+25, :]| and \mintinline{python3}|ref[t-25 : t+25, :]| to \mintinline{python3}|False| for all \mintinline{python3}|t| starting or ending a sequence of consecutive values of \mintinline{python3}|True| in \mintinline{python3}|ref[:, i]| for any reference speaker \mintinline{python3}|i|.

\end{document}